\newcommand{\ty}{{\mathcal{T}}}
\newcommand{\pp}{{\mathcal{P}}}
\newcommand{\nn}{{\mathbb N}}
\newtheorem{theorem}{Theorem}[section]         
\newtheorem{lemma}[theorem]{Lemma}             
\newtheorem{definition}[theorem]{Definition}   
\begin{document}

\title{Capacity results for compound wiretap channels}
\author{Igor Bjelakovi\'c, Holger Boche, and Jochen Sommerfeld \\[2mm]
\small Lehrstuhl f\"ur theoretische Informationstechnik, Technische Universit\"at M\"unchen, 80290 M\"unchen, Germany\\
Email: \{igor.bjelakovic, boche, jochen.sommerfeld\}@tum.de 
\thanks{}}
\maketitle

\begin{abstract}
We derive a lower bound on the secrecy capacity of the compound wiretap channel with channel state
information at the transmitter which matches the general upper bound on the secrecy capacity of general
compound wiretap channels given by Liang et al. and thus establishing a full coding theorem in this
case. We achieve this with a quite strong secrecy criterion and with a decoder that is robust against the
effect of randomisation in the encoding. This relieves us from the need of decoding the randomisation
parameter which is in general not possible within this model. Moreover we prove a lower bound and a multi-letter converse to 
the secrecy capacity of the compound wiretap channel without channel state information. 
\end{abstract}

\section{Introduction}
Compound wiretap channels are among the simplest non-trivial models incorporating the requirement of
security against a potential eavesdropper while at the same time the legitimate users suffer from
channel uncertainty. They may be considered therefore as a starting point for theoretical investigation
tending towards applications, for example, in wireless systems, a fact explaining an alive research
activity in this area in recent years (cf. \cite{liang}, \cite{bloch} and references therein).\\
In this paper we consider finite families of pairs of channels $\mathfrak{W}=\{(W_t,V_t):t=1,\ldots,
T\}$\footnote{Along the way we will comment what our results look like when applied to widely used class
  of models of the form $\mathfrak{W}=\{(W_t,V_s):t=1,\ldots, T, s=1,\ldots, S \}$ with $T\neq S$ which
  are special case of the model we are dealing with in this paper } with common input alphabet and
possibly different output alphabets. The legitimate users control $W_t$ and the eavesdropper observes the
output of $V_t$. We will be dealing with two communication scenarios. In the first one only the transmitter is 
informed about the index $t$ (channel state information (CSI) at the transmitter) while in the second the
legitimate users have no information about that index at all (no CSI). This setup is a generalisation of
Wyner's \cite{wyner-wire} wiretap channel.\\
Our contributions are summarised as follows: In \cite{liang} a general upper bound on the capacity of compound wiretap channel 
as the minimum secrecy
capacity of the involved wiretap channels was given. We prove
in Section \ref{sec:csi} that the models whose secrecy capacity matches this upper bound contain all compound wiretap
channels with CSI at the transmitter. At the same time we achieve this bound with a substantially
stronger security criterion called strong secrecy which has been
employed already in \cite{csis96}, \cite{maurer-wolf}, \cite{cai-winter-yeung}, and
\cite{devetak}. Indeed, our security proof follows closely that developed in \cite{devetak} for single
wiretap channel with classical input and quantum output. 
In order to achieve secrecy we follow the common approach according to which randomised encoding is a permissible operation. 
The impact of randomisation at the legitimate decoder's site is
usually compensated  by communicating to her/him the outcome of the random experiment performed. However
, in the case of compound wiretap channel with CSI at the transmitter this strategy does not work as is illustrated by an 
example in Section \ref{sec:csi}. We
resolve this difficulty by developing a decoding strategy which is independent of the particular channel realisation and 
is insensitive to randomisation while 
decoding just at the optimal secrecy rate for all channels $\{W_t: t=1,\ldots, T  \}$ simultaneously.\\
Moreover, a slight modification of our proofs allows us to derive a lower bound on the capacity of the compound 
wiretap channel without CSI. Additionally, we give a multi-letter converse to the coding theorem.
This is content of Section \ref{sec:no-csi}. Due to space limitation we cannot provide proofs 
here and refer to the online supporting material \cite{bjela2}.\\
Our results are easily extended to arbitrary sets (even uncountable) of wiretap channels via standard
approximation techniques \cite{blackw}. 


\section{Compound wiretap channel}\label{b}

Let $A,B,C$ be finite sets and $\theta=\{1, \ldots, T\}$ an index set. We consider two families of channels 
$W_t:A\to\mathcal{P}(B)$\footnote{$\mathcal{P}(B)$ denotes the set of probability distributions on $B$}, 
$V_t:A\to \mathcal{P}(C)$, $t\in \theta$, which we collectively abbreviate by $\mathfrak{W}$ and call
the compound wiretap channel generated by the given families of channels. 
Here the first family represents the communication link to the legitimate receiver while the output of
the latter is under control of the eavesdropper. In the rest of the paper expressions like $W_t^{\otimes
  n}$ or $V_t^{\otimes n}$ stand for the $n$-th memoryless extension of the stochastic matrices $W_t$,
$V_t$. \\
An $(n,J_n)$ code for the compound wiretap channel $\mathfrak{W}$ consists of a stochastic encoder 
$E:\mathcal{J}_n\to \mathcal{P}(A^n)$ (a stochastic matrix) with a message set
$\mathcal{J}_n:=\{1,\ldots, J_n \}$ and a collection of mutually 
disjoint decoding sets $\{D_j\subset B^n:j\in\mathcal{J}_n  \}$. The maximum error probability of a
$(n,J_n)$ code $\mathcal{C}_n$ is given by
\begin{equation}\label{eq:error-no-csi}
 e(\mathcal{C}_n):= \max_{t\in \theta}  \, \max_{j \in \mathcal{J}_n} \sum_{x^n\in A^n} E(x^n|j)
W_t^{\otimes n}(D_j^c| x^n).  
\end{equation} 
I.e. neither the sender nor the receiver have CSI.\\
If CSI is available at the transmitter the
probability of error in (\ref{eq:error-no-csi}) changes to
\begin{equation}\label{eq:error-csi}
e_{\textup{CSI}}(\mathcal{C}_n):= \max_{t\in \theta} \, \max_{j \in\mathcal{J}_n} \sum_{x^n\in A^n}
E_t(x^n|j)
W_t^{\otimes n}(D_j^c| x^n).
\end{equation} 
Notice, however, that the decoder appearing in (\ref{eq:error-csi}) has to be universal, i.e. independent of 
the channel index $t$ in accordance to the fact that
no CSI is available at the receiver.\\
We assume throughout the paper that the eavesdropper always knows which channel is in use.
\begin{definition}\label{code}
A non-negative number $R$ is an achievable secrecy rate for $\mathfrak{W}$
with or without CSI respectively if there is a sequence $(\mathcal{C}_n)_{n\in\nn}$ of $(n,J_n)$ codes such that
\[ \lim_{n\to\infty} e_{\textup{CSI}}(\mathcal{C}_n)=0 \textrm{ resp.  } \lim_{n\to\infty} e(\mathcal{C}_n)=0,\]
\begin{equation}\label{eq:strong-secrecy-def} 
\liminf_{n\to\infty}\frac{1}{n}\log J_n\ge R\textrm{ and }\lim_{n\to\infty}\max_{t\in\theta}I(J; Z_t^n)=0, 
\end{equation}
where $J$ is a uniformly distributed random variable taking values in $\mathcal{J}_n$ and $Z_t^{n}$ are the resulting
random variables at the output of eavesdropper's channel $V_t^{\otimes n}$.\\
The secrecy capacity in either scenario is given by the largest achievable secrecy rate and is denoted by
$C_S(\mathfrak{W})$ and $C_{S, CSI}(\mathfrak{W})$.
\end{definition}
\emph{Remark.} A weaker and widely used security criterion is obtained if we replace
(\ref{eq:strong-secrecy-def}) by $\lim_{n\to\infty}\max_{t\in\theta}\frac{1}{n}I(J; Z_t^n)=0 $. We prefer to
follow \cite{csis96}, \cite{cai-winter-yeung},  and \cite{devetak} and require the validity of
(\ref{eq:strong-secrecy-def}). A nice discussion of several secrecy criteria is contained in \cite{bloch}.

\section{Capacity results}

\subsection{Preliminaries}

In what follows we use the notation as well as some properties of \emph{typical} and \emph{conditionally
  typical} sequences from \cite{csis2}. For $p\in\mathcal{P}(A)$, $W:A\to\mathcal{P}(B)$, $x^n\in A^n$,
and $\delta>0$  we denote by $\ty_{p,\delta}^n$ the set of typical sequences and by
$\ty_{W,\delta}^n(x^n) $ the set of conditionally typical sequences given $x^n$.\\
The basic properties of these sets that are needed in the sequel are summarised in the following three lemmata.
\begin{lemma}\label{typical}
Fixing $\delta > 0$, for every $p \in \pp(A)$ and  $W:A \to \pp(B)$ we have
\begin{eqnarray}
p^{\otimes n}(\ty_{p,\delta}^n) & \geq &1- (n+1)^{|A|} 2^{-nc\delta^2} \\
W^{\otimes n}(\ty_{W,\delta}^n(x^n)|x^n) & \geq &1-(n+1)^{|A||B|} 2^{-nc \delta^2}
\end{eqnarray}
for all $x^n\in A^n$ with $c=1/(2\ln 2)$. In particular, there is $n_0\in\nn$ such that for each
$\delta>0$ and $n>n_0$
\begin{eqnarray}
 p^{\otimes n}(\ty_{p,\delta}^n) & \geq &1- 2^{-nc'\delta^2} \\
W^{\otimes n}(\ty_{W,\delta}^n(x^n)|x^n) & \geq &1- 2^{-nc' \delta^2}
\end{eqnarray}
holds with $c'=\frac{c}{2}$.
\end{lemma}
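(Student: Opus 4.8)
The strategy is to reduce both inequalities to a single quantitative concentration bound for empirical distributions, namely a version of the "typicality lemma" that comes from Hoeffding/Chernoff-type estimates combined with the method of types. First I would recall the precise definition of $\ty_{p,\delta}^n$ used in \cite{csis2}: a sequence $x^n\in A^n$ is $\delta$-typical for $p$ if $|N(a|x^n)/n - p(a)| \le \delta$ for every $a\in A$ (with the additional convention that $N(a|x^n)=0$ whenever $p(a)=0$), where $N(a|x^n)$ counts the occurrences of the symbol $a$. The conditionally typical set $\ty_{W,\delta}^n(x^n)$ is defined analogously in terms of the joint empirical distribution of $(x^n,y^n)$ relative to $p_{x^n}\otimes W$, where $p_{x^n}$ is the empirical distribution of $x^n$. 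The first step of the proof is therefore purely bookkeeping: fix these definitions so that the subsequent estimates attach to them.

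The core estimate is the following: for a product measure $\mu^{\otimes n}$ on a finite alphabet $\mathcal{X}$ and any fixed symbol $a$, the empirical frequency $N(a|x^n)/n$ is an average of $n$ i.i.d. Bernoulli$(\mu(a))$ random variables, so Hoeffding's inequality gives
\begin{equation}
\mu^{\otimes n}\Bigl(\bigl|\tfrac{1}{n}N(a|x^n)-\mu(a)\bigr|>\delta\Bigr)\le 2\cdot 2^{-2n\delta^2\log e}=2\cdot 2^{-nc'\delta^2/\,\text{const}},
\end{equation}
i.e. a bound of the form $2^{-nc\delta^2}$ with $c=1/(2\ln 2)$ once one tracks constants; a union bound over the $|A|$ symbols then yields $p^{\otimes n}(\ty_{p,\delta}^{n})\ge 1-|A|\,2^{-nc\delta^2}$. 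To obtain the stated sharper prefactor $(n+1)^{|A|}$ instead of $|A|$ — which is what one actually wants, since it is the form that survives the later use inside exponential-rate arguments — I would instead argue via the method of types: the complement of $\ty_{p,\delta}^n$ is a union of type classes $T_q^n$ with $\|q-p\|_\infty>\delta$, there are at most $(n+1)^{|A|}$ types, and each such type class has probability $p^{\otimes n}(T_q^n)\le 2^{-nD(q\|p)}$, while Pinsker's inequality gives $D(q\|p)\ge c\|q-p\|_1^2\ge c\delta^2$. Summing over the at most $(n+1)^{|A|}$ bad types gives exactly the claimed bound. For the conditional statement one repeats the argument with $A$ replaced by $A\times B$, conditioning on $x^n$: the conditional law of $y^n$ given $x^n$ under $W^{\otimes n}$ is the product $\bigotimes_{i} W(\cdot|x_i)$, whose joint type with $x^n$ concentrates around $p_{x^n}\otimes W$ by the same type-counting argument, now over the at most $(n+1)^{|A||B|}$ conditional types; this yields the second displayed inequality uniformly in $x^n$.

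The passage to the "in particular" statement is then elementary: since $(n+1)^{|A|}2^{-nc\delta^2}\to 0$ and more precisely $(n+1)^{|A|}\le 2^{n c\delta^2/2}$ for all $n$ larger than some $n_0=n_0(\delta,|A|)$ (because $\log(n+1)=o(n)$), we get $(n+1)^{|A|}2^{-nc\delta^2}\le 2^{-nc\delta^2/2}=2^{-nc'\delta^2}$ with $c'=c/2$, and similarly with $|A||B|$ in place of $|A|$; taking the larger of the two thresholds gives a single $n_0$ that works for both. I do not expect any real obstacle here — the only mild subtlety is making the dependence of $n_0$ on $\delta$ explicit, but since $\delta$ is fixed before $n$ this is harmless, and it is exactly the reason the statement quantifies "there is $n_0$" after "for each $\delta>0$". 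The one place to be careful is the constant: Hoeffding directly gives exponent $2\delta^2$ in natural units, i.e. $2\delta^2\log e = \delta^2/\ln 2$ in bits, which is twice $c\delta^2$ with $c=1/(2\ln2)$, so routing the proof through Pinsker ($D(q\|p)\ge \frac{1}{2\ln 2}\|q-p\|_1^2$) rather than through Hoeffding is what pins down the stated value of $c$ cleanly; I would present it that way.
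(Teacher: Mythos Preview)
Your proposal is correct and follows essentially the same route the paper indicates: the paper's proof is just the one-line reference ``Standard Bernstein--Sanov trick using the properties of types from \cite{csis2} and Pinsker's inequality,'' and what you spell out---bounding the atypical set as a union of at most $(n+1)^{|A|}$ (resp.\ $(n+1)^{|A||B|}$) type classes, each of probability $\le 2^{-nD(q\|p)}$, and then invoking Pinsker to get $D(q\|p)\ge c\delta^2$---is exactly that argument. Your handling of the ``in particular'' clause and your remark that Pinsker (rather than Hoeffding) is what pins down the constant $c=1/(2\ln 2)$ are both on target; the only cosmetic point is that the quantifier order in the lemma is awkwardly written, but your reading (that $n_0$ depends on $\delta$, consistent with the opening ``Fixing $\delta>0$'') is the intended one.
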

\begin{proof}
 Standard Bernstein-Sanov trick using the properties of types from  \cite{csis2} and Pinsker's inequality. 
The details can be found in \cite{wyrem} and references therein for example.
\end{proof}
Recall that for $p\in\pp(A)$ and $W:A\to\pp(B)$ $pW\in\pp(B)$ denotes the output distribution generated
by $p$ and $W$ and that  $x^n \in \ty^n_{p,\delta}$ and $y^n \in \ty^n_{W,\delta}(x^n)$ imply that $y^n
\in \ty^n_{pW,2|A|\delta}$. 
\begin{lemma}\label{alpha-beta} 
Let $x^n\in \ty^n_{p,\delta}$, then for $V:A\to\pp(C)$
\begin{eqnarray}
|\ty_{pV,2|A|\delta}^n| &\leq& \alpha^{-1}\\
V^n(z^n|x^n) &\leq& \beta \quad \textrm{for all} \quad z^n \in \ty^n_{V,\delta}(x^n)
\end{eqnarray} 
hold where
\begin{eqnarray}
\alpha &=&2^{-n(H(pV)+f_1(\delta))}\label{eq:def-alpha}\\
\beta &=&2^{-n(H(V|p)-f_2(\delta))}\label{eq:def-beta}
\end{eqnarray}
with universal $f_1(\delta),f_2(\delta)>0$ satisfying
$\lim_{\delta\to\infty}f_1(\delta)=0=\lim_{\delta\to\infty}f_2(\delta)$. 
\end{lemma}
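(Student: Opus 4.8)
The plan is to derive both bounds directly from the standard counting and probability estimates for typical and conditionally typical sequences, as collected in \cite{csis2}. For the cardinality bound on $\ty^n_{pV,2|A|\delta}$, I would start from the elementary fact that any typical set $\ty^n_{q,\eta}$ with respect to a distribution $q\in\pp(C)$ satisfies $|\ty^n_{q,\eta}|\le 2^{n(H(q)+g(\eta))}$ for a suitable function $g$ with $g(\eta)\to 0$ as $\eta\to 0$; this follows because every $z^n\in\ty^n_{q,\eta}$ has $q^{\otimes n}(z^n)\ge 2^{-n(H(q)+g(\eta))}$ together with $q^{\otimes n}(\ty^n_{q,\eta})\le 1$. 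Applying this with $q=pV$ and $\eta=2|A|\delta$, and absorbing the resulting error term into the definition $f_1(\delta):=g(2|A|\delta)$, yields $|\ty^n_{pV,2|A|\delta}|\le 2^{n(H(pV)+f_1(\delta))}=\alpha^{-1}$. Note that the hypothesis $x^n\in\ty^n_{p,\delta}$ is not actually needed for this first inequality, but it is harmless to carry it along; it is, however, exactly what couples the two parts of the lemma, since it is the hypothesis under which $\ty^n_{V,\delta}(x^n)\subseteq\ty^n_{pV,2|A|\delta}$ by the remark preceding the statement.

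For the second bound, fix $x^n\in\ty^n_{p,\delta}$ and $z^n\in\ty^n_{V,\delta}(x^n)$. By definition of conditional typicality, the joint type of $(x^n,z^n)$ is close to $p\times V$, so the conditional probability $V^{\otimes n}(z^n|x^n)=\prod_{a\in A}\prod_{c\in C}V(c|a)^{N(a,c|x^n,z^n)}$ can be estimated by replacing the empirical joint counts $N(a,c|x^n,z^n)$ by their ideal values $nP(a)V(c|a)$ up to a controlled error. Taking logarithms, $-\tfrac1n\log V^{\otimes n}(z^n|x^n)$ differs from $H(V|p)=-\sum_{a,c}p(a)V(c|a)\log V(c|a)$ by at most a quantity that tends to $0$ with $\delta$; calling this quantity $f_2(\delta)$ gives $V^{\otimes n}(z^n|x^n)\le 2^{-n(H(V|p)-f_2(\delta))}=\beta$. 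One has to be slightly careful with the terms where $V(c|a)=0$, but such pairs contribute $N(a,c|x^n,z^n)=0$ by the support constraint built into conditional typicality, so they are vacuous.

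The main point requiring care — rather than a genuine obstacle — is making the error functions $f_1$ and $f_2$ \emph{universal}, i.e. depending only on $\delta$ (and the fixed finite alphabets) but not on $p$, $V$, $n$, or the particular sequences. This is standard: the $\delta$-perturbation of types is uniform over all distributions on a fixed finite set, and continuity of the entropy functionals on the (compact) simplex gives a modulus of continuity independent of the base point. I would phrase the typicality definitions in the variational/$\ell_1$ form from \cite{csis2} so that the perturbation bounds are manifestly uniform, and then simply record that $f_1(\delta),f_2(\delta)>0$ can be chosen with $\lim_{\delta\to 0}f_1(\delta)=\lim_{\delta\to 0}f_2(\delta)=0$. (I note in passing that the statement writes $\lim_{\delta\to\infty}$, which should read $\lim_{\delta\to 0}$; the content is clearly the small-$\delta$ regime.)
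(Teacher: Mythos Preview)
Your proposal is correct and follows exactly the standard typical-sequence estimates from \cite{csis2} that the paper simply cites without elaboration; indeed the paper's entire proof is ``Cf.\ \cite{csis2}'', so you have supplied the details the authors left to the reference. Your observation that $\lim_{\delta\to\infty}$ is a typo for $\lim_{\delta\to 0}$ is also correct.
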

\begin{proof}
Cf. \cite{csis2}.
\end{proof}
In addition we need a further lemma which will be used to determine the rates at which reliable
transmission to the legitimate receiver is possible.
\begin{lemma}\label{output-bound}
Let $p, \tilde{p} \in \mathcal{P}(A)$ and two stochastic matrices $W, \widetilde{W}:A \to \mathcal{P}(B)$
be given. Further let $q,\tilde{q} \in \mathcal{P}(B)$ be the output distributions, the former generated by $p$ and
$W$ and the latter by $\tilde{p}$ and $\widetilde{W}$. Fix $\delta \in (0,\frac{1}{4|A||B|})$. Then for
every $n \in \nn$
\begin{equation}
q^{\otimes n}(\ty^n_{\widetilde{W}, \delta}(\tilde{x}^n)) \leq (n+1)^{|A||B|}
  2^{-n(I(\tilde{p},\widetilde{W})-f(\delta))}
\end{equation}
for all $\tilde{x}^n \in \ty^n_{\tilde{p},\delta}$ and 
\begin{equation}
q^{\otimes n}(\ty^n_{W, \delta} (x^n)) \leq (n+1)^{|A||B|} 
  2^{-n(I(p,W)-f(\delta))}  
\end{equation}
for all $x^n \in \ty^n_{p, \delta}$ holds for a universal $f(\delta) >0$ and $\lim_{\delta\to 0}
f(\delta)=0$.
\end{lemma}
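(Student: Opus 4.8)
The plan is to prove Lemma~\ref{output-bound} by a direct counting argument on typical sets, exactly the kind of estimate behind the classical ``packing''/covering bounds. I will prove the first inequality; the second follows by taking $\tilde p=p$ and $\widetilde W=W$.

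\medskip

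\emph{Step 1: reduce the bound to a pointwise estimate times a cardinality.} Fix $\tilde x^n\in\ty^n_{\tilde p,\delta}$. Every $y^n\in\ty^n_{\widetilde W,\delta}(\tilde x^n)$ satisfies $y^n\in\ty^n_{\tilde q,2|A|\delta}$ (the remark preceding Lemma~\ref{alpha-beta}, applied to $\tilde p,\widetilde W$). Hence
\[
 q^{\otimes n}(\ty^n_{\widetilde W,\delta}(\tilde x^n))\;=\!\!\sum_{y^n\in\ty^n_{\widetilde W,\delta}(\tilde x^n)}\!\! q^{\otimes n}(y^n)\;\le\; |\ty^n_{\widetilde W,\delta}(\tilde x^n)|\cdot\max_{y^n\in\ty^n_{\tilde q,2|A|\delta}} q^{\otimes n}(y^n).
\]
So it suffices to (a) upper bound the conditional typical set size and (b) upper bound $q^{\otimes n}(y^n)$ for $y^n$ that is $\tilde q$-typical, and check that the exponents combine to $-n(I(\tilde p,\widetilde W)-f(\delta))$.

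\medskip

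\emph{Step 2: the two standard type estimates.} For (a), the conditional-type counting bound from \cite{csis2} gives $|\ty^n_{\widetilde W,\delta}(\tilde x^n)|\le (n+1)^{|A||B|}\,2^{n(H(\widetilde W|\tilde p)+g_1(\delta))}$ with $g_1(\delta)\to 0$ (this is the counterpart of the lower part of Lemma~\ref{alpha-beta}, now as an upper bound on the size). For (b), if $y^n\in\ty^n_{\tilde q,2|A|\delta}$ then the empirical distribution of $y^n$ is within $2|A|\delta$ of $\tilde q$ in total variation, and since $q^{\otimes n}(y^n)=\prod_{b\in B} q(b)^{N(b|y^n)}$ depends only on that empirical distribution, a continuity/Lemma-2.7-type estimate yields $q^{\otimes n}(y^n)\le 2^{-n(H(\tilde q)-g_2(\delta))}$ with $g_2(\delta)\to 0$. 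Here I use that $q$ and $\tilde q$ need not coincide but the bound is in terms of $H(\tilde q)$ plus a term controlled by $\|q-\tilde q\|_1$ together with $\|{\rm type}(y^n)-\tilde q\|_1$; the point where one must be a little careful is precisely that $q\ne\tilde q$ in general, so the clean identity $-\tfrac1n\log q^{\otimes n}(y^n)=H(\tilde q)+D(\tilde q\|q)+o(1)$ is what is really being used, and $D(\tilde q\|q)\ge 0$ only helps.

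\medskip

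\emph{Step 3: combine exponents.} Plugging Step 2 into Step 1,
\[
 q^{\otimes n}(\ty^n_{\widetilde W,\delta}(\tilde x^n))\;\le\;(n+1)^{|A||B|}\,2^{-n\bigl(H(\tilde q)-H(\widetilde W|\tilde p)-g_1(\delta)-g_2(\delta)\bigr)}.
\]
Since $\tilde q=\tilde p\widetilde W$ we have $H(\tilde q)-H(\widetilde W|\tilde p)=H(\tilde p\widetilde W)-H(\widetilde W|\tilde p)=I(\tilde p,\widetilde W)$, so setting $f(\delta):=g_1(\delta)+g_2(\delta)$ gives the claim, and $\lim_{\delta\to0}f(\delta)=0$. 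The restriction $\delta<\tfrac1{4|A||B|}$ is exactly what keeps $2|A|\delta$ small enough that the typicality notions and the continuity estimates are non-vacuous. The main obstacle, such as it is, is bookkeeping: making sure the $\delta$-slack from ``$\tilde x^n$ is $\tilde p$-typical'' propagates correctly through the conditional-typicality definition to a $2|A|\delta$-slack on the output side, and that the mismatch $q\ne\tilde q$ is absorbed into a vanishing $f(\delta)$ rather than producing an uncontrolled $D(\tilde q\|q)$ term — which it does not, since that relative entropy only strengthens the bound.
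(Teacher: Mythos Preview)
The paper does not give its own proof of this lemma; it merely writes ``Cf.\ \cite{wyrem}''. So there is no in-paper argument to compare against.

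Your proof is correct and is exactly the standard type-method argument one expects here. The one place where the writeup could be tightened is Step~2(b): the identity you quote, $-\tfrac1n\log q^{\otimes n}(y^n)=H(\tilde q)+D(\tilde q\|q)+o(1)$, implicitly assumes continuity of $D(\cdot\|q)$ at $\tilde q$, which can fail when $q$ has small or zero entries, and in any case would make the $o(1)$ depend on $q$ rather than being universal. But you do not actually need that identity. Since $-\tfrac1n\log q^{\otimes n}(y^n)=H(P_{y^n})+D(P_{y^n}\|q)$ holds exactly (with $P_{y^n}$ the type of $y^n$), and $D(P_{y^n}\|q)\ge 0$ while $H(P_{y^n})\ge H(\tilde q)-g_2(\delta)$ by the uniform continuity of entropy (Lemma~2.7 of \cite{csis2}, with $g_2$ depending only on $|A|,|B|,\delta$), the bound $q^{\otimes n}(y^n)\le 2^{-n(H(\tilde q)-g_2(\delta))}$ follows directly, and your $f(\delta)=g_1(\delta)+g_2(\delta)$ is then genuinely universal. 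This is precisely what you mean when you say the divergence term ``only helps''; phrasing it this way avoids the appearance of relying on continuity of the relative entropy.
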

\begin{proof}
Cf. \cite{wyrem}.
\end{proof}
The last lemma is a standard result from large deviation theory.
\begin{lemma}\label{chernoff}(Chernoff bounds)
Let $Z_1,\ldots,Z_L$ be i.i.d. random variables with values in $[0,1]$ and expectation
$\mathbb{E}Z_i=\mu$, and $0<\epsilon<\frac{1}{2}$. Then it follows that
\begin{equation}
Pr \left\{ \frac{1}{L} \sum^L_{i=1} Z_i \notin [(1\pm\epsilon)\mu]  \right\} \leq 2\exp \left( -L\cdot
\frac{\epsilon^2\mu}{3} \right), \nonumber
\end{equation}
\end{lemma}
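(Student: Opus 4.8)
The plan is to use the classical Bernstein--Chernoff exponential moment method. First I would split the bad event by a union bound into the upper deviation $\{\frac1L\sum_i Z_i \ge (1+\epsilon)\mu\}$ and the lower deviation $\{\frac1L\sum_i Z_i \le (1-\epsilon)\mu\}$; this is precisely where the prefactor $2$ originates, so it suffices to bound each one-sided probability by $\exp(-L\epsilon^2\mu/3)$.

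For the upper tail, for any $s>0$ Markov's inequality applied to $e^{s\sum_i Z_i}$ gives $\Pr\{\sum_i Z_i \ge (1+\epsilon)L\mu\} \le e^{-s(1+\epsilon)L\mu}\,\mathbb{E}[e^{s\sum_i Z_i}] = e^{-s(1+\epsilon)L\mu}\prod_{i=1}^L \mathbb{E}[e^{sZ_i}]$, where the last equality uses independence. Since each $Z_i$ takes values in $[0,1]$, convexity of $t\mapsto e^{st}$ on $[0,1]$ yields the pointwise chord estimate $e^{sZ_i}\le 1 + Z_i(e^s-1)$, hence $\mathbb{E}[e^{sZ_i}] \le 1 + \mu(e^s-1) \le \exp(\mu(e^s-1))$. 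Combining, the upper-tail probability is at most $\exp\big(L\mu(e^s-1-s(1+\epsilon))\big)$, and the choice $s = \ln(1+\epsilon)>0$ is optimal and delivers the bound $\exp\big(-L\mu[(1+\epsilon)\ln(1+\epsilon)-\epsilon]\big)$. An entirely parallel computation, now taking the exponential parameter negative (equivalently, bounding $\mathbb{E}[e^{-sZ_i}]\le\exp(\mu(e^{-s}-1))$ by the same convexity argument) and optimizing, handles the lower tail and produces $\exp\big(-L\mu[\epsilon+(1-\epsilon)\ln(1-\epsilon)]\big)$.

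It then remains to verify the two elementary real-variable inequalities $(1+\epsilon)\ln(1+\epsilon)-\epsilon \ge \epsilon^2/3$ and $\epsilon+(1-\epsilon)\ln(1-\epsilon) \ge \epsilon^2/3$ for $0<\epsilon<\tfrac12$; both follow by comparing Taylor expansions about $\epsilon=0$ (the left-hand sides are $\tfrac{\epsilon^2}{2}-\tfrac{\epsilon^3}{6}+\cdots$ and $\tfrac{\epsilon^2}{2}+\tfrac{\epsilon^3}{6}+\cdots$ respectively), or by checking that the difference of the two sides vanishes together with its first derivative at $\epsilon=0$ and has nonnegative second derivative on the interval. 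Plugging these in and recombining the two tails via the union bound yields exactly the stated estimate. There is no genuine obstacle here — the only mildly delicate point is keeping the lower-tail optimization and the sign bookkeeping straight — so in the actual write-up I would most likely just record the exponential-moment step and cite a standard large-deviation reference for the numerical inequalities, since the lemma is used only as an off-the-shelf tool.
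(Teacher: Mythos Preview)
Your proof is correct and is exactly the standard exponential-moment argument for multiplicative Chernoff bounds; all steps (the chord bound $e^{sZ_i}\le 1+Z_i(e^s-1)$ from $Z_i\in[0,1]$, the optimization in $s$, and the two calculus inequalities for $0<\epsilon<\tfrac12$) go through as you describe. The paper, however, gives no proof at all: it simply introduces the lemma as ``a standard result from large deviation theory'' and uses it as a black box, so your final remark about merely citing a reference is in fact precisely what the authors do.
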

where $[(1\pm\epsilon)\mu]$ denotes the interval $[(1 - \epsilon)\mu, (1 + \epsilon)\mu].$\\

\subsection{CSI at the transmitter}\label{sec:csi}

The main result in this section is the following theorem.
\begin{theorem}\label{CSI-code}
The secrecy capacity of the compound wiretap channel $\mathfrak{W}$ with CSI at the transmitter is given by
\begin{equation}
C_{S,CSI}(\mathfrak{W})= \min_{t \in \theta} \max_{V\rightarrow X \rightarrow (YZ)_t}(I(V,Y_t)-I(V,Z_t)).
\end{equation}
\end{theorem}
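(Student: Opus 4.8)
The plan is to establish the theorem in two halves: the converse (upper bound) and the direct part (achievability). The converse is the easy direction, since it is essentially due to Liang et al.\ \cite{liang}: for each fixed $t$ the compound code in particular must work on the single wiretap channel $(W_t,V_t)$ (the transmitter's knowledge of $t$ only helps), so $C_{S,CSI}(\mathfrak{W})\le C_S(W_t,V_t)=\max_{V\to X\to(YZ)_t}(I(V,Y_t)-I(V,Z_t))$ by the Csisz\'ar--K\"orner single-letter formula; taking the minimum over $t\in\theta$ gives the bound. One should check that the strong secrecy requirement in Definition \ref{code} does not spoil this — but since the converse only uses the existence of codes with the weaker normalised criterion, it goes through verbatim.

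The substance is the direct part. Fix rates $R<\min_t R_t$ where $R_t := \max_{V\to X\to(YZ)_t}(I(V,Y_t)-I(V,Z_t))$, and fix for each $t$ an auxiliary/input pair $(V^{(t)},X^{(t)})$ nearly attaining $R_t$ (for simplicity of the sketch I ignore the auxiliary prefix $V\to X$ and work with an input distribution $p_t$, i.e.\ I will treat the channels already in the prefixed form so that $R_t = I(p_t,W_t)-I(p_t,V_t)$; the general case is recovered by applying everything to the DMC $V\mapsto(YZ)_t$). The encoder, knowing $t$, will use a random codebook construction of Wyner type: a set of $J_n$ messages, each associated to a bin of $L_{n,t}$ randomisation subcodewords $X^n(j,\ell)$ drawn i.i.d.\ from (the typical-set-conditioned version of) $p_t^{\otimes n}$, with $\frac1n\log J_n \approx R$ and $\frac1n\log(J_nL_{n,t}) \approx I(p_t,W_t)-\tau$ for a small slack $\tau$, so that $\frac1n\log L_{n,t}$ sits just above $I(p_t,V_t)$.

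The two things to prove about the random construction, along the lines of \cite{devetak}, are: (i) \emph{reliability} — there is a single decoder, not depending on $t$ or on $\ell$, with error $\to0$ for every $t$; and (ii) \emph{strong secrecy} — $\max_t I(J;Z_t^n)\to0$. For (i) the key point stressed in the abstract is that we cannot afford to decode $\ell$, so the decoding sets must be robust to the randomisation: I would take $D_j := \bigcup_{\ell}\bigcap_{t} \ty^n_{W_t,\delta}(X^n(j,\ell))$ suitably disjointified, and bound the error using Lemma \ref{typical} for the correct-decoding event and Lemma \ref{output-bound} (the output-distribution bound $q^{\otimes n}(\ty^n_{W,\delta}(x^n))\le (n+1)^{|A||B|}2^{-n(I(p,W)-f(\delta))}$) together with a union bound over the $J_nL_{n,t}$ codewords and the $T$ channels to control collisions; since $\frac1n\log(J_nL_{n,t})<I(p_t,W_t)$, the expected error is exponentially small, and $T$ being finite costs only a factor $T$. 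For (ii) I would use the standard channel-resolvability/covering estimate: conditioned on $J=j$, the eavesdropper sees the output of $V_t^{\otimes n}$ driven by a uniform mixture of $L_{n,t}$ i.i.d.\ codewords; since $\frac1n\log L_{n,t}>I(p_t,V_t)$, this mixture is exponentially close in variational distance to $p_tV_t^{\otimes n}$ with overwhelming probability over the codebook — this is where Lemma \ref{alpha-beta} (the bounds $|\ty^n_{pV,2|A|\delta}|\le\alpha^{-1}$ and $V^n(z^n|x^n)\le\beta$ with $\alpha^{-1}\beta = 2^{-n(I(p,V)-f_1-f_2)}$) and the Chernoff bound of Lemma \ref{chernoff} enter, exactly as in Devetak's approach: a Chernoff bound shows each empirical output mass concentrates around its mean, a union bound over the polynomially-many typical $z^n$ and over $t$ keeps the failure probability doubly-exponentially small, and then one converts small variational distance into small $I(J;Z_t^n)$ via a uniform-continuity bound on conditional entropy (the $\eta\log\eta$ estimate). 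Finally, a selection argument: since the expected reliability error and the probability of the secrecy-failure event both tend to $0$, there exists a deterministic realisation of the codebook that is simultaneously good for all $t\in\theta$; this fixed codebook, together with the channel-independent decoder, furnishes the code family and hence $R$ is achievable.

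The main obstacle — and the genuinely new point over the single-channel case — is making the decoder in (i) simultaneously universal in $t$ \emph{and} insensitive to $\ell$, while keeping the secrecy rate $\frac1n\log L_{n,t}$ channel-dependent (each $t$ gets its own bin size, tuned to $I(p_t,V_t)$). One must check that enlarging the decoding region to a union over $\ell$ of intersections over $t$ does not create so many overlaps that reliability breaks: the union over $\ell$ has $L_{n,t}$ terms but each typical shell has exponent $\approx -nH(W_t|p_t)$, and the competing-codeword count is $J_nL_{n,t}\approx 2^{n(I(p_t,W_t)-\tau)}$, so the product is $2^{-n\tau}\cdot 2^{-n(H(p_tW_t)-H(W_t|p_t)-\cdots)}$ type quantities — the bookkeeping has to be arranged so that the slack $\tau$ and the $f(\delta)$, $f_1(\delta)$, $f_2(\delta)$ terms all shrink in the right order ($n\to\infty$ first, then $\delta\to0$, then $\tau\to0$), and one must confirm the binning does not eat into the secrecy margin. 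The example promised in Section \ref{sec:csi} shows that the naive alternative — telling the receiver $\ell$ — is genuinely impossible here, so this robust-decoder construction is not just a convenience but a necessity.
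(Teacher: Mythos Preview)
Your converse, the random-coding/binning architecture, the Chernoff-based covering argument for secrecy, the selection step, and the conversion of small variational distance into $\max_t I(J;Z_t^n)\to 0$ via entropy continuity all match the paper's approach. The gap is in your decoder. You propose
\[
D_j := \bigcup_{\ell}\bigcap_{t} \ty^n_{W_t,\delta}\bigl(X^n(j,\ell)\bigr),
\]
taking an \emph{intersection} over $t$. This fails on two counts. First, with CSI at the transmitter the codebooks are $t$-dependent: the encoder for state $t$ sends $X^{(t)}_{j\ell}$ drawn from $p'_t$, so the symbol $X^n(j,\ell)$ without a $t$-superscript is already inconsistent with your own setup (and the bin size $L_{n,t}$ varies with $t$, so even the range of $\ell$ in your union is $t$-dependent). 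Second, and decisively, an intersection over $t$ is the wrong quantifier for the correct-decoding event: if the true state is $t_0$, the output $Y^n$ is with high probability $W_{t_0}$-typical given $X^{(t_0)}_{j\ell}$, but there is no reason it should be $W_t$-typical given the \emph{different} codeword $X^{(t)}_{j\ell}$ for $t\neq t_0$ (or, in a single-codebook reading, $W_t$-typical for a channel $W_t$ that was not the one actually used). Your correct-decoding probability is therefore not close to $1$ and step (i) collapses.

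The paper's decoder reverses the quantifier: it takes a double \emph{union}
\[
D'_j(\mathcal{X}):=\bigcup_{s\in\theta}\bigcup_{k\in[L_{n,s}]} \ty^n_{W_s,\delta}\bigl(X^{(s)}_{jk}\bigr),
\]
disjointified across $j$. Correct decoding is then immediate, since the true pair $(t,\ell)$ is one of the terms in the union. For collisions one bounds, for the true input $X^{(t)}_{j\ell}$ and each competitor $(j',s,k)$ with $j'\neq j$,
\[
\mathbb{E}\, W_t^{\otimes n}\!\bigl(\ty^n_{W_s,\delta}(X^{(s)}_{j'k})\,\big|\,X^{(t)}_{j\ell}\bigr)
\le \frac{q_t^{\otimes n}\bigl(\ty^n_{W_s,\delta}(X^{(s)}_{j'k})\bigr)}{p_t^{\otimes n}(\ty^n_{p_t,\delta})}
\le (n+1)^{|A||B|}\,2^{-n(I(p_s,W_s)-f(\delta))}
\]
via Lemma~\ref{output-bound}, with exponent $I(p_s,W_s)$ governed by the \emph{competing} codeword's channel index $s$, not the true $t$. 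The union over $(s,k)$ then costs $\sum_{s}L_{n,s}$ per wrong message, and since
\[
J_n L_{n,s}\le 2^{n(\min_r(I(p_r,W_r)-I(p_r,V_r))-\tau)}\cdot 2^{n(I(p_s,V_s)+\tau/4)}\le 2^{n(I(p_s,W_s)-3\tau/4)},
\]
the count matches the exponent for every $s$ separately and the total error is at most $T\cdot 2^{-n\tau/2}$ up to polynomial factors. This is exactly the bookkeeping you flag as ``the main obstacle'' in your last paragraph; the point is that it only balances because the decoder is a union over $s$, so that the exponent supplied by Lemma~\ref{output-bound} lines up with the channel index of each competing codeword. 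With your intersection decoder this alignment never gets started, because the correct-decoding term already fails.
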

Notice first that the inequality
\[ C_{S,CSI}(\mathfrak{W})\le\min_{t \in \theta} \max_{V\rightarrow X \rightarrow (YZ)_t}(I(V,Y_t)-I(V,Z_t)) \]
is trivially true since we cannot exceed the secrecy capacity of the worst wiretap channel in the family
$\mathfrak{W}$. This has been already pointed out in \cite{liang}. The rest of this section is devoted to
the proof of the achievability.
\begin{proof}
We choose $p_1, \ldots, p_T \in \mathcal{P}(A)$ and define new
distributions on $A^n$ by
\begin{equation}\label{eq:pruned(t)}
p'_t(x^n):= \left \{ \begin{array}{ll}
\frac{p^{\otimes n}_t(x^n)}{p^{\otimes n}_t(\ty^n_{p_t,\delta})} & \textrm{if $x^n \in \ty^n_{p_t,\delta}$},\\
0 &  \textrm{otherwise}
\end{array} \right. .
\end{equation} 
Define then for $z^n\in C^n$
\begin{equation}
\tilde{Q}_{t,x^n}(z^n):=V_t^n(z^n|x^n)\cdot\mathbf{1}_{\ty^n_{V_t,\delta}(x^n)}(z^n)
\end{equation}
on $C^n$. Additionally, we set for $z^n\in C^n$
\begin{equation}\label{eq:theta-prime}
\Theta'_t(z^n) =\sum_{x^n \in \ty^n_{p_{t},\delta}} p'_t(x^n)\tilde{Q}_{t,x^n}(z^n).
\end{equation}
Now let $S:=\{z^n\in C^n: \Theta'_t(z^n) \geq \epsilon\alpha_t \}$ where
$\epsilon=2^{-nc'\delta^2}$ (cf. Lemma \ref{typical}) and $\alpha_t$ is from (\ref{eq:def-alpha}) in
Lemma \ref{alpha-beta} computed with respect to $p_t$ and $V_t$. By lemma $3.2$ the support of
$\Theta'_t$ has cardinality $\leq \alpha^{-1}_t$ since for each $x^n\in \ty^n_{p_{t},\delta}$ it holds that $\ty^n_{V_{t},\delta}(x^n)\subset \ty^n_{p_{t}V_{t}, 2|A|\delta}$, 
which implies that $\sum_{z^n \in S}\Theta_t(z^n) \geq
1-2\epsilon$, if 
\begin{eqnarray}
\Theta_t(z^n)&=&\Theta'_t(z^n)\cdot\mathbf{1}_S(z^n) \quad \textrm{and} \nonumber\\
Q_{t,x^n}(z^n)&=&\tilde{Q}_{t,x^n}(z^n) \cdot \mathbf{1}_S(z^n).\label{eq:theta}
\end{eqnarray}
Now for each $t \in \theta$ define $ J_n  \cdot L_{n,t} $ i.i.d. random
variables $X^{(t)}_{jl}$ with $j \in [J_n]:= \{1,\ldots, J_n   \}$ and $l \in [L_{n,t}]:=\{1,\ldots,
L_{n,t}  \}$ each of them distributed according to $p'_t $ with 
\begin{eqnarray}
J_n&=& \left\lfloor2^{n[\min_{t \in \theta}(I(p_t,W_t)-I(p_t,V_t))-\tau]}\right\rfloor\label{eq:J-n} \\
L_{n,t}&=&\left\lfloor 2^{n[I(p_t,V_t)+\frac{\tau}{4}]}\right\rfloor\label{eq:L-n-t}
\end{eqnarray} 
for $\tau>0$. Moreover we suppose that the random matrices $\{X^{(t)}_{j,l}\}_{j\in[J_n],l\in[L_{n,l}]}  $ and
$ \{X^{(t')}_{j,l}\}_{j\in[J_n],l\in[L_{n,l}]}   $ are independent for $t\neq t'$.
Now it is obvious from (\ref{eq:theta-prime}) and the definition of the set $S$ that for any $z^n\in S$
$\Theta_t(z^n)=\mathbb{E}_t Q_{t,X^{(t)}_{jl}}(z^n)\ge \epsilon \alpha_t$ if $\mathbb{E}_t$ is the expectation
value with respect to the distribution $p'_t$. For the random variables $\beta^{-1}_t
Q_{t,X^{(t)}_{jl}}(z^n)$ ($\beta_t$ from \eqref{eq:def-beta}) define the event
\begin{equation}\label{eq:first-iota}
\iota_j(t)=\bigcap_{z^n \in C^n} \left\{\frac{1}{L_{n,t}}\sum_{l=1}^{L_{n,t}} Q_{t,X^{(t)}_{jl}}(z^n) \in [(1 \pm
  \epsilon) \Theta_t(z^n)]\right \},
\end{equation}
and keeping in mind that  $\Theta_t(z^n) \geq \epsilon \alpha_t$ for all $z^n \in S$. It follows that
for all $j \in [J_n]$ and for all $t \in \theta$
\begin{equation}\label{eq:dev-t}
\textrm{Pr}\{ (\iota_j(t))^c\} \leq 2 |C|^n
\exp  \Big(- L_{n,t} \frac{ 2^{-n[I(p_t,V_t)+g(\delta)]}}{3}   \Big)
\end{equation}
by Lemma \ref{chernoff}, Lemma \ref{alpha-beta}, and our choice $\epsilon=2^{-nc'\delta^2}$ with 
$g(\delta):=f_1(\delta)+f_2(\delta)+3c'\delta^2$. Making $\delta>0$ sufficiently small we have for all 
sufficiently large $n\in \nn$ and all $t\in \theta$ $L_{n,t} 2^{-n[I(p_t,V_t)+g(\delta)]}\ge 2^{n\frac{\tau}{8}}$. 
Thus, for this choice of $\delta$ the RHS of (\ref{eq:dev-t}) is double exponential in $n$ uniformly in
$t\in\theta$   and can be made smaller than $\epsilon J_n^{-1}$ for all $j \in [J_n]$ and all
sufficiently large $n \in \nn$. I.e. 
\begin{equation}\label{eq:summand-1}
 \textrm{Pr}\{ (\iota_j(t))^c\} \leq \epsilon J_n^{-1} \quad \forall t\in\theta.
\end{equation} 
Let us turn now to the coding part of the problem.
Let $p'_t\in\mathcal{P}(A^n)$ be given as in \eqref{eq:pruned(t)}. We abbreviate
 $\mathcal{X}:=\{X^{(t)}\}_{t\in \theta}$ for the family of random
matrices $X^{(t)}=\{ X_{jl}^{(t)} \}_{j\in [J_n], l\in [L_{n,t}]}$ whose components are i.i.d. according to
$p'_t$. We will show now how the reliable transmission of the message $j\in [J_n]$ can be achieved when
randomising over the index $l\in L_{n,t}$ without any attempt to decode the latter at the legitimate receiver's site. To this
end let us define for each $j\in[J_n]$ a random set
\begin{equation}\label{eq:achiev-is-4}
 D'_j(\mathcal{X}):=\bigcup_{s\in\theta}\bigcup_{k\in[L_{n,s}]} \ty_{W_s,\delta}^{n}(X_{jk}^{(s)}),
\end{equation} 
and the subordinate random decoder $\{ D_j(\mathcal{X}) \}_{j\in[J_n]} \subseteq B^n$ is given by
\begin{equation}\label{eq:achiev-is-5}
 D_j(\mathcal{X}):=D'_j(\mathcal{X})\cap \Big( \bigcup_{\substack{j'\in [J_n] \\ j'\neq j}}
   D'_{j'}(\mathcal{X}) \Big)^c. 
\end{equation} 
Consequently we can define the random average probabilities of error for a specific channel $t\in \theta$ by
\begin{equation}\label{eq:error-t}
\lambda_n^{(t)}(\mathcal{X}):=\frac{1}{J_n}\sum_{j\in [J_n]}\frac{1}{L_{n,t}}\sum_{l\in[L_{n,t}]} 
  W_t^{\otimes n}((D_j(\mathcal{X}) )^{c}| X_{jl}^{(t)} ) .
\end{equation}
 Now (\ref{eq:achiev-is-5}) implies for each $t\in \theta$ and $l\in [L_{n,t}]$
\begin{equation}\label{eq:achiev-is-6}
\begin{split}
 W_t^{\otimes n}&((D_j(\mathcal{X}) )^{c}| X_{jl}^{(t)} )  \\
 &\le  W_t^{\otimes n} (\bigcap_{s\in \theta } \bigcap_{k\in[L_{n,s}]}  
(\ty_{W_s,\delta}^{n}(X_{jk}^{(s)}))^c| X_{jl}^{(t)}        ) \\
&+\sum_{\substack{ j'\in[J_n]\\ j'\neq j }}\sum_{s\in \theta}\sum_{k\in[L_{n,s}]}W_t^{\otimes n}(
\ty_{W_s,\delta}^n  (X_{j'k}^{(s)})|X_{jl}^{(t)}     )  \\
&\le W_t^{\otimes n}( (\ty_{W_t,\delta}^{\otimes n}( X_{jl}^{(t)} ))^c| X_{jl}^{(t)}  ) \\
& +\sum_{\substack{ j'\in[J_n]\\ j'\neq j }}\sum_{s\in \theta}\sum_{k\in[L_{n,s}]}W_t^{\otimes n}(
\ty_{W_s,\delta}^n  (X_{j'k}^{(s)})|X_{jl}^{(t)}     ).
\end{split}
\end{equation}
By Lemma \ref{typical} and the independence of all involved
random variables we obtain
\begin{equation}\label{eq:achiev-is-7}
\begin{split}
 &\mathbb{E}_{\mathcal{X}}  (  W_t^{\otimes n}((D_j(\mathcal{X}) )^{c}| X_{jl}^{(t)} )) \\
 & \leq (n+1)^{|A||B|}\cdot 2^{-nc\delta^2} \\
 & + \sum_{\substack{ j'\in[J_n]\\ j'\neq j }}\sum_{s\in \theta}\sum_{k\in[L_{n,s}]} \mathbb{E}_{X_{j'k}^{(s)}}
      \mathbb{E}_{X_{jl}^{(t)}}  W_t^{\otimes n}( \ty_{W_s,\delta}^n  (X_{j'k}^{(s)})|X_{jl}^{(t)}     ).
\end{split}
\end{equation} 
We shall find now for $j'\neq j$ an upper bound on
\begin{equation}\label{eq:achiev-is-8}
\begin{split}
\mathbb{E}_{X_{jl}^{(t)}} &  W_t^{\otimes n}( \ty_{W_s,\delta}^n 
(X_{j'k}^{(s)})|X_{jl}^{(t)}     ) \\
 &= \sum_{x^n\in A^n} p'_t(x^n)  W_t^{\otimes n}( \ty_{W_s,\delta}^n (X_{j'k}^{(s)})|x^n  ) \\
&\le \sum_{x^n\in A^n}\frac{p_t^{\otimes n} (x^n)}{p_t^{\otimes n}(\ty_{p_t,\delta}^n)}
W_t^{\otimes n}( \ty_{W_s,\delta}^n (X_{j'k}^{(s)})|x^n  )\\
&=  \frac{q_t^{\otimes n}(\ty_{W_s,\delta}^n 
(X_{j'k}^{(s)})  )  }{ p_t^{\otimes n}(\ty_{p_t,\delta}^n)}.
\end{split}
\end{equation}
By Lemma \ref{typical} and by Lemma \ref{output-bound} for any $t,s \in \theta$ we have
\begin{equation}\label{eq:achiev-is-9}
\begin{split}
p_t^{\otimes n}(\ty_{p_t,\delta}^n) & \ge 1-(n+1)^{|A|}\cdot 2^{-nc\delta^2} \\
q_t^{\otimes n}(\ty_{W_s,\delta}^n 
(X_{j'k}^{(s)})  ) & \le (n+1)^{|A||B|}\cdot 2^{-n (I(p_s,W_s) -f(\delta))} 
\end{split}
\end{equation}
with a universal $f(\delta)>0$ satisfying $\lim_{\delta\to 0}f(\delta)=0$ since $X_{j',k}^{(s)}\in 
\ty_{p_s,\delta}^n $ with probability 1. Thus inserting this into (\ref{eq:achiev-is-8}) we obtain
\begin{equation}\label{eq:achiev-is-10}
\begin{split}
 \mathbb{E}_{X_{jl}^{(t)}} & W_t^{\otimes n}( \ty_{W_s,\delta}^n 
(X_{j'k}^{(s)})|X_{jl}^{(t)}     )\\
 &\le \frac{(n+1)^{|A||B|}}{1-(n+1)^{|A|}\cdot 2^{-nc\delta^2}}\cdot 2^{-n (I(p_s,W_s) -f(\delta))}
\end{split}
\end{equation}
for all $s,t\in \theta$, all $j'\neq j$, and all $l\in [L_{n,t}], k\in [L_{n,s}]$. Now by defining
$\nu_n(\delta):=(n+1)^{|A||B|}\cdot 2^{-nc\delta^2}$ and $\mu_n(\delta):=1-(n+1)^{|A|}\cdot
2^{-nc\delta^2}$ thus  for each $t\in \theta$, $l\in [L_{n,t}]$, and  $j\in[J_n]$ (\ref{eq:achiev-is-7}) and
(\ref{eq:achiev-is-8}) lead to
{\small
\begin{equation}\label{eq:achiev-is-11}
\begin{split}
&\mathbb{E}_{\mathcal{X}}  (  W_t^{\otimes n}((D_j(\mathcal{X}) )^{c}| X_{jl}^{(t)} )) \\
& \le \nu_n(\delta)
+ \frac{(n+1)^{|A||B|}}{\mu_n(\delta)}J_n \sum_{s\in \theta} L_{n,s}  2^{-n (I(p_s,W_s) -f(\delta))} \\
&\le \nu_n(\delta)  \\
&+ \frac{(n+1)^{|A||B|}}{\mu_n(\delta)} T\cdot  J_n \cdot 
2^{-n (\min_{s\in\theta}( I(p_s,W_s)-I(p_s,V_s) ) -f(\delta)-\frac{\tau}{4}   )} \\
&\le \nu_n(\delta) 
 + \frac{(n+1)^{|A||B|}}{\mu_n(\delta)} T \cdot 2^{-n(\tau -f(\delta)-\frac{\tau}{4} )} \\
&\le \nu_n(\delta)
 + \frac{(n+1)^{|A||B|}}{\mu_n(\delta)} T \cdot 2^{-n\frac{\tau}{2}} 
\end{split}
\end{equation}
}where we have used (\ref{eq:L-n-t}), (\ref{eq:J-n}), and we have chosen $\delta>0$ small enough to
ensure that $\tau -f(\delta)-\frac{\tau}{4}\ge \frac{\tau}{2} $. Defining $a=a(\delta,\tau):=\frac{\min\{ c\delta^2,
  \frac{\tau}{4}  \}}{2}$ we can find $n(\delta,\tau,|A|,|B|)\in\nn$ such that  for all $n\ge n(\delta,\tau,|A|,|B|) $ 
\begin{equation}\label{eq:achiev-is-12}
 \mathbb{E}_{\mathcal{X}}(  W_t^{\otimes n}((D_j(\mathcal{X}) )^{c}| X_{jl}^{(t)} ))\le T \cdot 2^{-n a}
\end{equation} 
holds for all $t\in \theta$, $l\in [L_{n,t}]$, and $j\in [J_n]$. Consequently, for any $t\in \theta$ we obtain
\begin{equation}\label{eq:error}
\mathbb{E}_{\mathcal{X}}  (\lambda^{(t)}_n(\mathcal{X})) \le T\cdot 2^{-na}.
\end{equation} 
We define for any $t\in\theta$ an event $\iota_0 (t)=\{\lambda_n^{(t)}(\mathcal{X}) \leq \sqrt{T}2^{-n\frac{a}{2}}\}$.
Then using the Markov inequality and (\ref{eq:error}) yields
\begin{equation}\label{eq:summand-2}
\textrm{Pr}\{ (\iota_0(t))^c\} \leq \sqrt{T}2^{-n\frac{a}{2}}.
\end{equation}
Set $\iota:=\bigcap_{t\in\theta}\bigcap_{j=0}^{J_n}\iota_j(t)$. Then with (\ref{eq:summand-1}), (\ref{eq:summand-2}), 
and applying the union bound we obtain 
\begin{equation*}
\textrm{Pr}\{ \iota^c\} 
\leq \sum_{t\in\theta}\sum_{j=0}^{J_n}\textrm{Pr}\{(\iota_j(t))^c\} \leq
 T  \epsilon +T^{\frac{3}{2}} 2^{-n \frac{a}{2}} 
\le   T^2 \cdot 2^{-nc''}
\end{equation*}
for a suitable positive constant $c''>0$ and all sufficiently large $n\in\nn$.\\
Hence, we have shown that for each $t\in \theta$ there exist realisations 
$\{ (x^{(t)}_{jl})_{j\in[J_n], l\in [L_{n,t}]}: t\in\theta\}\in \iota$ of 
$\mathcal{X}$.\\
We show that the secrecy level is fulfilled uniformly in $t\in \theta$  for any particular 
$\{ (x^{(t)}_{jl})_{j\in[J_n], l\in [L_{n,t}]}: t\in\theta\}\in \iota$ Denoting by $||\cdot ||$ the variational
 distance we have
\begin{multline}\label{eq:secure}
\Big\| \frac{1}{L_{n,t}}\sum^{L_{n,t}}_{l=1} V^n_t(\cdot|x^{(t)}_{jl})-\Theta_t(\cdot)\Big\|\\
\leq \frac{1}{L_{n,t}}\sum^{L_{n,t}}_{l=1} \left\| V^n_t(\cdot|x^{(t)}_{jl})-\tilde{Q}_{t,x^{(t)}_{jl}}(\cdot)\right\|\\
+ \Big\| \frac{1}{L_{n,t}}\sum^{L_{n,t}}_{l=1} \big(\tilde{Q}_{t,x^{(t)}_{jl}}(\cdot)- Q_{t,x^{(t)}_{jl}}(\cdot)\big)
\Big\| \\
+ \Big\| \frac{1}{L_{n,t}}\sum^{L_{n,t}}_{l=1} Q_{t,x^{(t)}_{jl}}(\cdot)-\Theta_t(\cdot)\Big\| \leq 5\epsilon,
\end{multline}
In the first term the functions $V^n_t(\cdot |x_{jl}^{(t)})$ and $\tilde{Q}_{t,x_{jl}^{(t)} }(\cdot)$ differ if $z^n \notin
\ty^n_{V_t,\delta}(x_{jl}^{(t)})$, so it makes a contribution of $\epsilon$ to the bound.
In the second term $\tilde{Q}_t$ and $Q_t$ are different for $z^n \notin S$ and because $\iota_j(t)$ and
 $ \sum_{z^n \in S}\Theta_t(z^n) \geq
1-2\epsilon $ imply
$\frac{1}{L_{n,t}}\sum^{L_{n,t}}_{l=1} \sum_{z^n \in S} Q_{t,x^{(t)}_{jl}}(z^n) \geq 1-3\epsilon$, the
second term is bounded by $3\epsilon$. The third term is bounded by $\epsilon$ which follows directly
from \eqref{eq:first-iota}.\\
For any $\{ (x^{(t)}_{jl})_{j\in[J_n], l\in [L_{n,t}]}: t\in\theta\}\in \iota$ with the corresponding decoding sets
$\{ D_j: j\in [J_n] \}$ it follows by construction that
\begin{equation}\label{eq:expurg-1}
\frac{1}{J_n}\sum_{j\in[J_n]}\frac{1}{L_{n,t}}\sum_{l\in [L_{n,t}]}W^{\otimes n}_t(D_j^c|
 x^{(t)}_{jl})\le \sqrt{T}
 \cdot 2^{-n a'}
\end{equation} 
is fulfilled for all $t \in \theta$ with $a'>0$, which means that we have found a $(n,J_n)$ code with
average error probability tending to zero for $n \in \nn$ sufficiently large for any channel
realisation. Now by a standard expurgation scheme we show that this still holds for the maximum error
probability. We define the set
\begin{equation}\label{eq:expurg-3}
G_t:=\{j \in J_n: \frac{1}{L_{n,t}}\sum_{l\in [L_{n,t}]}W^{\otimes n}_t(D_j^c|x^{(t)}_{jl}) \leq \sqrt{\eta}  \}
\end{equation} 
with $\eta:=\sqrt{T}\cdot 2^{-na'}$ and denote its complement as $B_t:=G_t^c$ and the union of all
complements as $B=\bigcup_{t \in \theta} B_t$.
Then \eqref{eq:expurg-1} and \eqref{eq:expurg-3} imply that
\begin{equation}
\eta \geq \frac{1}{J_n}\sum_{j\in[J_n]}  \frac{1}{L_{n,t}}\sum_{l\in [L_{n,t}]}W^{\otimes n}_t(D_j^c|
 x^{(t)}_{jl}) \geq \frac{|B_t|}{J_n} \sqrt{\eta}
\end{equation}
for all $t \in \theta$ and by the union bound it follows that $|B|\leq T\cdot \sqrt{\eta}\cdot J_n$ .
After removing all $j\in B$ (which are at most a fraction of $T^{\frac{5}{4}}
2^{-n\frac{a'}{2}}$ of $J_n$) and relabeling we obtain a new $(n,\tilde{J}_n)$ code $(E_j.D_j)_{j\in [\tilde{J}_n]}$ 
without changing the rate. The maximum error probability of the new code fulfills for
sufficiently large $n \in \nn$
\begin{equation}
\max_{t \in \theta} \, \max_{j \in [\tilde{J_n}]} \frac{1}{L_{n,t}} \sum_{l \in [L_{n,t}]} W^{\otimes n}_t(D^c_j |
x^{(t)}_{jl}) \leq T^{\frac{1}{4}} \cdot 2^{-n\frac{a'}{2}}.
\end{equation}
On the other hand, if we set $\hat{V}^n_t(z^n|(j,l)):= V^n_t(z^n|x^{(t)}_{jl})$ and further define $\hat{V}^n_{t,j}(z^n) = 
\frac{1}{L_{n,t}}\sum ^{L_{n,t}}_{l=1}\hat{V}^n_t(z^n|(j,l))$,  $
\bar{V}^n_t(z^n) =\frac{1}{\tilde{J}_n} \sum^{\tilde{J}_n}_{j=1}\hat{V}^n_{t,j}(z^n)$,
we obtain that
\begin{equation}
\|\hat{V}^n_{t,j}-\bar{V}^n_t\| \leq  \|  \hat{V}^n_{t,j}-\Theta_t\| + \| \Theta_t- \bar{V}^n_t\|
\le 10\epsilon,
\end{equation}
for all  $j\in [\tilde{J}]_n, t\in \theta$ with $\epsilon=2^{-nc'\delta^2}$ where we have used the convexity
of the variational distance and \eqref{eq:secure} which still applies by our expurgation procedure. For a uniformly distributed 
 random variable 
$J \in [\tilde{J}_n]$, we obtain with Lemma $2.7$ of \cite{csis2} (uniform continuity of the entropy function)
\begin{eqnarray}
I(J;Z^n_t) &=& \sum^{\tilde{J}_n}_{j=1}\frac{1}{\tilde{J}_n}(H(\bar{V}_t^n)-H(\hat{V}_{t,j}^n))\\
                &\leq&-10\epsilon\log(10\epsilon)+10 n\epsilon\log|C|
\end{eqnarray}
uniformly in $t\in \theta$ (for $10\epsilon\leq e^{-1}$). 
Hence the strong secrecy level of the
definition \ref{code} holds uniformly in $t\in\theta$.
Using standard arguments (cf. \cite{csis2} page 409) we then have shown the achievability of
the secrecy rate $\min_{t \in \theta} \max_{V\rightarrow X \rightarrow (YZ)_t}(I(V,Y_t)-I(V,Z_t))$.
\end{proof}
Now with a simple example we will show that indeed identifying both the message  and the randomizing
indices at the legitimate receiver is not possible for all pairs $j\in [J_n]$ and $l \in [L_{n,t}]$. This
is in contrast to the case where we have only one channel to both the legitimate receiver and the
eavesdropper (cf. \cite{devetak}, \cite{csis96}). 
Therefore let $\eta\in [0,1]$ and set $ D_{\eta}:=   
\begin{pmatrix}
1-\eta & \eta\\
\eta & 1-\eta 
\end{pmatrix}
.$
Further, define the channels to the legitimate receiver and to the eavesdropper by
$W_0=D_\eta, \eta \in [0,\frac{1}{2}),$  $ V_0:=D_\tau W_0,\tau \in [0,\frac{1}{2}), \tau \approx 0$,
and $W_1:= D_{\hat{\tau}} V_0=D_{\hat{\tau}}D_\tau W_0, V_1:= \begin{pmatrix}
\frac{1}{2} & \frac{1}{2}\\
\frac{1}{2} & \frac{1}{2}
\end{pmatrix}$. Let $\mathfrak{W}:=\{(W_t,V_t):t=0,1 \}$.
Hence $V_0$ and $W_1$ are degraded versions of $W_0$ and
$I(p,V_1)=0$, for all $ p \in \mathcal{P}(A)$. Now for every $p \in \mathcal{P}(A)$ we can choose $\tau$ small 
enough, such that
$I(p,W_0)-I(p,V_0) < I(p,W_1)$. With $p_0=\big(\frac{1}{2}, \frac{1}{2} \big)$, $\nu>0$ we have in the case of 
CSI at the transmitter
by the defining equations \eqref{eq:J-n} and \eqref{eq:L-n-t}  $J_n= 2^{n[I(p_0,W_0)-I(p_0,V_0))-\nu]}$ and  
$L_{n,0}= 2^{n[I(p_0,V_0)+\frac{\nu}{4}]}$.
But choosing $\hat{\tau}\in (0,1/2]$ large enough we obtain 
$1/n\log (J_nL_{n,0}) =I(p_0,W_0)-  \frac{3\nu}{4} > I(p_0,W_1)=C_{CSI}\{W_0,W_1\}$,
for all small $\nu>0$. Here, $C_{CSI} \{W_0,W_1\}$ stands for the capacity of the compound channel
without secrecy, with CSI at the 
transmitter.\\
\emph{Remark.} Note that for $\mathfrak{W}:=\{ W_t,V_s: t=1,\ldots T, s=1,\ldots S\}$ with
$S\neq T$ and the pair $(s,t)$ known to the transmitter prior to transmission nothing new happens. A
slight modification of the arguments presented above shows that
\[C_{C,CSI}(\mathfrak{W})= \min_{(t,s)} \max_{V\rightarrow X \rightarrow (Y_t Z_s)}(I(V,Y_t)-I(V,Z_s)). \]

\subsection{No CSI}\label{sec:no-csi}
In this final section we describe briefly the capacity results in the case where neither the transmitter
nor the receiver have access to the channel state. We omit the proofs which can be found in the online
supporting material \cite{bjela2}. Achievability part, for example, is shown by a simple modification of
the proof presented in Section \ref{sec:csi}. 
Nevertheless it should be mentioned that the traditional method of proof, i.e. sending messages and the
randomisation parameters to the legitimate receiver, works as well in the present model.
\begin{theorem}\label{no-csi-multi-letter-capacity}
 For any compound wiretap channel $\mathfrak{W}$ we have
\begin{equation*}
  C_S(\mathfrak{W})\ge \max_{U\to X\to Y_t Z_t}( \min_{t \in \theta}I(U,Y_t)-\max_{t \in \theta}I(U,Z_t)),
\end{equation*} 
where $X$ and $Y_tZ_t$  are connected via $(W_t,V_t)$. Consequently, by simple blocking argument we obtain
\begin{equation*}
 C_S(\mathfrak{W})\ge \lim_{n\to\infty} \frac{1}{n}\max_{U\to X\to Y_t^n Z_t^n}( \min_{t \in \theta}I(U,Y_t^n)-
\max_{t \in \theta}I(U,Z_t^n)).
\end{equation*} 
\newpage
Moreover,
\begin{equation*}
 C_S(\mathfrak{W})\le \lim_{n\to\infty} \frac{1}{n}\max_{U\to X\to Y_t^n Z_t^n}( \min_{t \in \theta}I(U,Y_t^n)-
\max_{t \in \theta}I(U,Z_t^n)).
\end{equation*} 
\end{theorem}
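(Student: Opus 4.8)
The three assertions split into two achievability statements and a converse, and I would treat them in that order.

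\emph{Lower bounds.} For the first inequality I would run the proof of Theorem~\ref{CSI-code} with the $t$-dependence removed from the encoder: fix a single $p\in\pp(A)$, the law of $X$, together with a prefix channel $U\to X$, and replace $p_t$ by $p$ and the family $L_{n,t}$ by one common index count
\[
L_n=\big\lfloor 2^{n[\max_{t\in\theta}I(U,Z_t)+\tau/4]}\big\rfloor,\qquad
J_n=\big\lfloor 2^{n[\min_{t\in\theta}I(U,Y_t)-\max_{t\in\theta}I(U,Z_t)-\tau]}\big\rfloor .
\]
Since $L_n\ge 2^{n[I(U,Z_t)+\tau/4]}$ for \emph{every} $t$, the Chernoff step \eqref{eq:dev-t}--\eqref{eq:summand-1}, the variational-distance estimate \eqref{eq:secure} and the entropy-continuity bound on $I(J;Z_t^n)$ carry over verbatim, simultaneously for all $t$, so strong secrecy holds uniformly in $t$. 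Reliability of the universal decoder $D_j=\bigcup_{s\in\theta}\bigcup_{l\in[L_n]}\ty^n_{W_s,\delta}(X_{jl})$, disjointified as in \eqref{eq:achiev-is-5}, follows as in \eqref{eq:achiev-is-6}--\eqref{eq:achiev-is-12} because $\tfrac1n\log(J_nL_n)\le\min_s I(U,Y_s)-\tfrac{3\tau}{4}$; here Lemma~\ref{output-bound} is applied with $\tilde p=p$ and $\widetilde W=W_s$, $W=W_t$. Expurgation then shows that $\min_tI(U,Y_t)-\max_tI(U,Z_t)-\tau$ is achievable; letting $\tau\downarrow0$ and optimizing over $U$ gives the first inequality. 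The second inequality follows by a blocking argument: applying the first to the memoryless block channel $\mathfrak W^{(n)}:=\{(W_t^{\otimes n},V_t^{\otimes n}):t\in\theta\}$, regarded as a single use, shows $C_S(\mathfrak W)\ge\tfrac1n a_n$ for each $n$, where $a_n:=\max_{U\to X^n\to Y_t^nZ_t^n}(\min_{t\in\theta}I(U,Y_t^n)-\max_{t\in\theta}I(U,Z_t^n))$; and since a product auxiliary variable and an independent block-product prefix give $I(U,Y_t^{n+m})=I(U_1,Y_t^n)+I(U_2,Y_t^m)$ and likewise for $Z$, one gets $a_{n+m}\ge a_n+a_m$, so with $0\le a_n\le n\log|B|$ Fekete's lemma makes $\lim_n a_n/n=\sup_n a_n/n$ exist and equal the claimed bound.

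\emph{Converse.} Let $(\mathcal C_n)_{n\in\nn}$ be $(n,J_n)$ codes with $e(\mathcal C_n)\to0$, $\liminf_n\tfrac1n\log J_n\ge R$ and $\lim_n\max_t I(J;Z_t^n)=0$, where $J$ is uniform on $\mathcal J_n$. Fano's inequality applied to the legitimate link $W_t^{\otimes n}$ gives $H(J|Y_t^n)\le 1+e(\mathcal C_n)\log J_n$ for every $t$, while strong secrecy gives $I(J;Z_t^n)\le\delta_n$ with $\delta_n\to0$. Writing $\log J_n=I(J;Y_t^n)+H(J|Y_t^n)$ and adding and subtracting $I(J;Z_t^n)$ yields, for every $t$,
\[
\log J_n\le \big(I(J;Y_t^n)-I(J;Z_t^n)\big)+1+e(\mathcal C_n)\log J_n+\delta_n .
\]
Evaluating this at an index $t_0$ attaining $\min_t I(J;Y_t^n)$ and using $I(J;Z_{t_0}^n)\ge0$ gives $\log J_n\le \min_t I(J;Y_t^n)+1+e(\mathcal C_n)\log J_n+\delta_n$; since moreover $\max_t I(J;Z_t^n)\le\delta_n$, this is at most $\min_t I(J;Y_t^n)-\max_t I(J;Z_t^n)+1+e(\mathcal C_n)\log J_n+2\delta_n$. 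Finally, for each $t$ the code induces the Markov chain $J\to X^n\to(Y_t^n,Z_t^n)$ with $P_{X^n}(x^n)=\tfrac1{J_n}\sum_j E(x^n|j)$, so $J$ is an admissible choice of auxiliary variable in the definition of $a_n$ and hence $\min_t I(J;Y_t^n)-\max_t I(J;Z_t^n)\le a_n$. Rearranging the resulting bound $\log J_n(1-e(\mathcal C_n))\le a_n+1+2\delta_n$ and dividing by $n$ gives $R\le\liminf_n\tfrac1n\log J_n\le\limsup_n\tfrac1n\log J_n\le\lim_n a_n/n$, which is the converse; together with the second lower bound it identifies $C_S(\mathfrak W)$.

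\emph{Main obstacle.} The achievability adaptations are routine once one observes that a single randomization count $L_n$ governed by $\max_t I(U,Z_t)$ already suffices for all eavesdropper channels at once. The delicate point in the converse is the interplay of Fano for the \emph{worst} legitimate channel with the strong-secrecy bound on \emph{every} eavesdropper channel, which is exactly what lets the negative term $-\max_t I(J;Z_t^n)$ be inserted at no asymptotic cost and forces the $\min$/$\max$ shape of the formula; a close second is checking that the right-hand side really converges, i.e.\ the superadditivity of $(a_n)_{n}$. I do not expect a genuine single-letterization (removal of $\lim_n\tfrac1n$) to be available here, in line with the general state of affairs for compound wiretap channels without CSI.
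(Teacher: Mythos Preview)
The paper does not actually prove this theorem in the text; it states that ``we omit the proofs which can be found in the online supporting material'' and only indicates that the achievability is ``a simple modification of the proof presented in Section~\ref{sec:csi}.'' Your proposal carries out exactly that modification---replacing the state-dependent $p_t$ and $L_{n,t}$ by a single $p$ and a single $L_n$ driven by $\max_t I(U,Z_t)$, and reusing the universal decoder \eqref{eq:achiev-is-4}--\eqref{eq:achiev-is-5} and the estimates \eqref{eq:dev-t}--\eqref{eq:secure}---so your achievability argument is precisely what the paper has in mind. Your converse (Fano for every $W_t$, strong secrecy to insert $-\max_t I(J;Z_t^n)$, and taking $U=J$ as the auxiliary) is the standard route and is correct; the superadditivity check for the existence of $\lim_n a_n/n$ via product auxiliaries and $\min_t(a_t+b_t)\ge\min_t a_t+\min_t b_t$, $\max_t(a_t+b_t)\le\max_t a_t+\max_t b_t$ is a detail the paper does not even spell out.

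One small notational point: once you invoke a prefix channel $U\to X$, the random codewords and the typical sets in the decoder should live over the $U$-alphabet with the concatenated channels $U\to Y_t$ and $U\to Z_t$ replacing $W_t,V_t$; writing $\ty^n_{W_s,\delta}(X_{jl})$ is a mismatch, since Lemma~\ref{output-bound} then yields the exponent $I(p,W_s)$ rather than the needed $I(U,Y_s)$. With the codebook generated in the $U$-alphabet and the effective channels used throughout (exactly the ``standard arguments'' the paper invokes at the end of the proof of Theorem~\ref{CSI-code}), the calculation closes as you intend.
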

Let us consider now the case $\mathfrak{W}:=\{ W_t,V_s: t=1,\ldots T, s=1,\ldots S\}$ with $S\neq T$ and
the pair $(s,t)$ unknown to both the transmitter and the legitimate receiver. If we additionally assume that
each $V_s$ is a degraded version of every $W_t$ then the upper bound in Theorem 2 of \cite{ahlswcsis1}
together with the concavity of $I(X;Y_t|Z_s)$ in the input distribution, which holds in the degraded case, imply that
\begin{equation*}
C_S(\mathfrak{W}) = \max_{p\in\mathcal{P}(A)}(\min_{t}I(p,W_t)-\max_{s}I(p,V_s)).
\end{equation*}
a result which was obtained in \cite{liang} with a weaker notion of secrecy.

\section*{Acknowledgment}
Support by the Deutsche Forschungsgemeinschaft (DFG) via projects 
BO 1734/16-1, BO 1734/20-1, and by the Bundesministerium f\"ur Bildung und Forschung (BMBF) via grant
01BQ1050 is gratefully acknowledged.

\bibliographystyle{IEEEtran}
\bibliography{references}

\end{document}